\newif\ifarxiv
\arxivtrue 

\newif\iftodo
\todotrue

\newif\ifanonymous

\ifarxiv

\documentclass{article}
\usepackage[a4paper, total={6.5in, 10in}]{geometry}
\usepackage[T1]{fontenc}
\usepackage{babel}

\usepackage{authblk}

\usepackage{hyperref}
\hypersetup{
    colorlinks=true,
    linkcolor=blue,
    filecolor=magenta, 
    urlcolor=cyan,
    pdftitle={Asymptotically Optimal Quantum Universal Quickest Change Detection},
    pdfpagemode=FullScreen,
}

\else

\documentclass[conference]{IEEEtran}
\IEEEoverridecommandlockouts

\usepackage{url}

\fi

\usepackage{cite}
\usepackage{amsmath,amssymb,amsfonts,amsthm}
\usepackage{mathtools}
\usepackage{algorithmic}
\usepackage{graphicx}
\usepackage{textcomp}
\usepackage{xcolor}
\usepackage[nameinlink]{cleveref}
\usepackage{aricksMacros}

\usepackage{tikz}
\usetikzlibrary{arrows.meta,positioning,calc,decorations.pathreplacing}
\tikzset{every picture/.style={line width=0.75pt}} 

\newcommand{\RE}[2]{D\left( #1 \| #2 \right)}
\newcommand{\QRE}[2]{S\left(\left. #1 \right\| #2 \right)}

\newcommand{\TFA}{\bar{T}_{FA}}
\DeclareMathOperator{\WADD}{WADD}
\DeclareMathOperator*{\esssup}{ess\,sup}

\newcommand{\deEst}{\hat{p}^w}
\newcommand{\DeEst}{\hat{P}^w}
\newcommand{\bwi}{B_{w,i}}
\newcommand{\mui}{\mu_{w,i}}

\newcommand{\XlM}{X^{\ell, \Mcal}}

\def\BibTeX{{\rm B\kern-.05em{\sc i\kern-.025em b}\kern-.08em
    T\kern-.1667em\lower.7ex\hbox{E}\kern-.125emX}}

\begin{document}

\title{Asymptotically Optimal Quantum Universal Quickest Change Detection\\
}

        



        
    

\ifarxiv

    \author[1]{Arick Grootveld} 
    \affil[1]{Syracuse University}
    \author[1]{Haodong Yang}
    \author[1]{Nandan Sriranga}
    \author[1]{Biao Chen} 
    \author[1]{Venkata Gandikota} 
    \author[1]{Jason Pollack} 

    \date{}

\else

\ifanonymous

\author{\IEEEauthorblockN{Anonymous Authors}}

\else

\author{\IEEEauthorblockN{Arick Grootveld}
\IEEEauthorblockA{
 \textit{Syracuse University}\\
aegrootv@syr.edu}
\and 
\IEEEauthorblockN{Haodong Yang}
\IEEEauthorblockA{
 \textit{Syracuse University}\\
hyang85@syr.edu}
\and
\IEEEauthorblockN{Nandan Sriranga}
\IEEEauthorblockA{
 \textit{Syracuse University}\\
nsrirang@syr.edu}
\and 
\IEEEauthorblockN{Biao Chen}
\IEEEauthorblockA{
 \textit{Syracuse University}\\
bichen@syr.edu}
\and
\IEEEauthorblockN{Venkata Gandikota}
\IEEEauthorblockA{
 \textit{Syracuse University}\\
vsgandik@syr.edu}
\and
\IEEEauthorblockN{Jason Pollack}
\IEEEauthorblockA{
 \textit{Syracuse University}\\
japollac@syr.edu}
}
\fi

\fi

\maketitle

\begin{abstract}
This paper investigates the quickest change detection of quantum states in a universal setting — specifically, the post-change quantum state is not known {\em a priori}. We establish the asymptotic optimality of a two-stage approach in terms of worst average delay to detection. The first stage employs block POVMs with classical outputs that preserve quantum relative entropy to arbitrary precision. The second stage leverages a recently proposed windowed-CUSUM algorithm that is known to be asymptotically optimal for quickest change detection with an unknown post-change distribution in the classical setting. 
\end{abstract}

\ifarxiv
\else
\begin{IEEEkeywords}
Universal Hypothesis Testing, Quickest Change Detection, Quantum Change Point Detection
\end{IEEEkeywords}
\fi

\section{Introduction}
\label{section:Introduction}

Change point detection refers to a class of problems whose goal is to detect a change, if it occurs, in the distribution from which a sequence of samples are drawn. For online change point detection, the well-accepted formulation is the so-called quickest change detection (QCD) in which one attempts to minimize the delay to detection subject to a constraint on the false alarm rate, or equivalently, the mean time to false alarm. For the classical setting, Page's test, also known as the cumulative sum (CUSUM) test, was shown to be optimal for QCD \cite{page1954continuous} as it minimizes the worst-case average delay to detection \cite{lorden1971procedures}. 

There have been recent efforts to adapt the QCD to the quantum setting, namely to detect a change in the quantum state given a sequence of quantum systems \cite{akimoto2011discrimination, sentis2016quantum, sentis2017exact, sentis2018online, fanizza2023ultimate, john2025fundamental}. In particular, a quantum CUSUM (QUSUM) algorithm was proposed in \cite{fanizza2023ultimate} that was shown to be asymptotically optimal for quantum QCD for finite-dimensional quantum systems. An extension of this result to infinite dimensions was addressed in \cite{john2025fundamental}. 

Both the classical CUSUM and its quantum counterpart, QUSUM, assume known pre- and post-change distributions or quantum operators. In practice, it is often more realistic to only assume knowledge of pre-change distribution or operator -- the fact that a change occurs is often due to an unknown shift/disruption in distribution/operator in various applications. There have been a number of approaches dealing with QCD with unknown post-change distributions \cite{lai1998information, liang2022quickest,liang2023quickest, xie2023window, liang2024quickest}. Among them, the windowed-CUSUM \cite{liang2024quickest}, which combines a windowed density estimate and the classical CUSUM test, was shown to be optimal -- it achieves asymptotically the same delay to detection as the CUSUM procedure with known post-change distribution. This paper addresses the quantum analog of this problem -- we consider quantum QCD when the post-change quantum operator is unknown. 

There have also been recent related developments in quantum hypothesis testing with unknown operators. This includes the case with a simple null and composite alternative \cite{fujiki2025quantum}, sequential universal quantum hypothesis testing \cite{zecchin2025quantum}, and one-sample and two-sample universal quantum hypothesis testing \cite{grootveld2025towards}. 

In this work, we describe an algorithm for the \textit{quantum universal change point detection problem} in a finite dimensional Hilbert space. Our algorithm combines the windowed-CUSUM result in \cite{liang2024quickest} with optimal measurements for quantum hypothesis testing \cite{hayashi2001asymptotics}, and we show that this algorithm is asymptotically optimal with respect to Lorden's criterion. The general nature of this result could make it useful for various applications in quantum information science, such as entanglement distillation \cite{banerjee2024quantum} and quantum communication \cite{gong2025quantum}.

\section{Background}
\label{section:Background}
Let $\log$ be the natural logarithm. Define the relative entropy between two distributions $P$ and $Q$ as
$$\RE{Q}{P} = \sum\limits_{i=1}^d q_i \log \frac{q_i}{p_i}.$$ 
Similarly, define the quantum relative entropy between two density operators $\sigma$ and $\rho$ to be $$S(\sigma\| \rho) = \Tr[\sigma (\log \sigma - \log \rho)].$$ 
We use $\supp(P)$ to denote the support of distribution $P$, $\dim(\rho)$ for the dimension of the operator $\rho$, and $[\cdot]_+$ to denote the function $[x]_+=\max \{0, x\}$. 

\subsection{Classical Change Point Detection}
\label{subsection:Background_ClassicalCPD}

Let $X_t\sim F_t$ be a discrete-time random process with distribution $F_t$ at time $t$. 
A change point for the sequence is $\nu \geq 0$ 
such that 
\begin{equation}
    F_t = \begin{cases}
        P, & t \leq \nu\\
        Q, & t > \nu,
    \end{cases}
\end{equation}
where $P$ and $Q$ are respectively the pre and post-change distributions in the $d$-dimensional probability simplex,  $\Pcal^d$. We use $\Prob_{\nu}$, $\E_{\nu}$ to denote the probability or expectation of an event conditioned on $\nu$ being the change point.  
Thus, $\nu \rightarrow \infty$ corresponds to no change point occurring, so that the samples are a homogeneous sequence generated by $P$. An algorithm for detecting a change point is characterized by a stopping time $T$, 
the time step at which the algorithm declares a change. The average time to false alarm, $\TFA$, is the expected stopping time when no change has occurred: 
\begin{equation}
    \TFA = \E_{\infty}[T].
\end{equation}
The performance metric most commonly used for QCD is the worst case average detection delay (WADD) \cite{tartakovsky2014sequential}, which characterizes the delay in detecting a change point under the worst case change point location and pre-change samples, i.e.,
\begin{equation}
    \WADD(T) = \sup_{\nu \geq 0} \sup_{\substack{X^{\nu}\\\Prob_{\infty}[X^\nu]>0}} \E_{\nu}[T - \nu | T > \nu, X^{\nu}].
\end{equation}
There is a clear trade-off between $\TFA$ and $\WADD(T)$: while it is desired to have $\TFA$ large and $\WADD(T)$ small, any detection scheme that has a large $\TFA$ is likely to have a longer delay to detection. Lorden established in \cite{lorden1971procedures} that the optimal tradeoff between $\TFA$ and $\WADD$ is given by the following relationship as $\TFA \to \infty$:
\begin{equation}
    \label{equation:LordenConverse}
    \WADD(T) \geq \frac{\log \TFA}{D(Q\|P)} + O(1).
\end{equation}
Lorden went on to show that the so-called CUSUM algorithm achieved this tradeoff asymptotically, in the sense that, 
as $\TFA \to \infty$ 
\begin{equation}
    \label{equation:LordenAchievability}
    \WADD(T_{\text{CUSUM}}) \leq \frac{\log \TFA}{D(Q\|P)} (1 + o(1)).
\end{equation}

The CUSUM algorithm was proposed by Page in \cite{page1954continuous} for the case when both pre- and post-change distributions are known. It computes the CUSUM statistic 
\begin{equation}
    S_k = [S_{k-1} + Z_k]_+,
\end{equation}
where $Z_k = \log \frac{Q[X_k]}{P[X_k]}$ and with initial conditions $Z_0, S_0 = 0$. A change is declared whenever $S_k>h$, where $h$ is the threshold that is designed to control the false alarm rate.  


The CUSUM statistic requires explicit knowledge of $P$ and $Q$ to compute the log-likelihood ratio, and thus the classical CUSUM cannot directly handle situations where the post-change distribution, $Q$, is not specified. There have been many attempts in addressing QCD with unknown post-change distributions. 
In particular, a \emph{non-parametric window-limited adaptive} (NWLA) CUSUM algorithm was proposed in \cite{liang2024quickest}, which we review here. Let $w \in \N$ be the window size. For $n > w$ let $\hat{Q}_n^w$ be an arbitrary method of estimating a probability density using samples $X_{n-w}^{n-1}\triangleq \{X_{n-w},\cdots,X_{n-1}\}$, The empirical log-likelihood ratio of sample $n$ is 
\begin{equation}
    \hat{Z}^w_n = \log \frac{\hat{Q}_{n}^{w}[X_n]}{P[X_n]},
\end{equation}
and for $n \leq w$ take $\hat{Z}^w_n = 0$. Then the NWLA-CUSUM statistic is $\hat{S}^w_n = \left[\hat{S}^w_{n-1} + \hat{Z}^w_n\right]_+$ when $n > w$, and $\hat{S}_n^w = 0$ when $n \leq w$. We state a specific version of Conditions 1 and 2 from \cite{liang2024quickest}:
\begin{condition}[KL-Loss of Estimator]
    \label{condition:KL_Loss}
    For $w$ large enough, there exist constants $0 < C_1 < \infty$ and $\frac{1}{2} < \beta_1 < \infty$ such that 
    \begin{equation}
        \label{equation:Condition1}
        \E_P\left[\RE{P}{\hat{Q}^{w}_n}  \right] \leq \frac{C_1}{w^{\beta_1}}.
    \end{equation}
\end{condition}

Here, we use $E_{P}[\cdot]$ to denote an expectation conditioned on samples following the distribution $P$. 

\begin{condition}[Vanishing Second Moment]
    \label{condition:SecondMoment}
    For $w$ large enough, there exist constants $0 < C_2 < \infty$, and $0 < \beta_2 < 2$ such that 
    \begin{equation}
        \label{equation:Condition2}
        \E_P\left[\left(\log \frac{P[X_{w+1}]}{\hat{Q}_{1}^{w}[X_{w+1}]}\right)^2\right] \leq \frac{C_2}{w^{\beta_2}},
    \end{equation}
\end{condition}
where $\E_P$ denotes the expectation when $X^n \sim P$. 

In \cite[Theorem 2]{liang2024quickest}, the authors showed that if the NWLA-CUSUM algorithm is equipped with $\hat{Q}_n^w$ satisfying \Cref{condition:KL_Loss} and \Cref{condition:SecondMoment}, then the following result holds. 
\begin{theorem}[Theorem 2 of \cite{liang2024quickest}]
    \label{theorem:NWLA_CUSUM_Result}

    Suppose we have a change point problem with a known pre-change distribution $P$ and an unknown post-change distribution $Q$. Assume $\hat{Q}^w_n$ is a kernel density estimator satisfying \Cref{condition:KL_Loss} and \Cref{condition:SecondMoment} with $\beta_1 = \frac{1}{2}$. Then taking $w = (\log\TFA)^{1/2}$ in the NWLA-CUSUM algorithm using this kernel achieves 
    \begin{align}
        \WADD(T_{\text{NWLA}}) \leq \frac{\log(\bar{T}_{FA})}{D(Q \| P)} \left(1 + \Theta\left(   \log(\bar{T}_{FA})^{-1/4}  \right) \right).
    \end{align}
\end{theorem}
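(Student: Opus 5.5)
This is a cited result (Theorem 2 of \cite{liang2024quickest}), so the plan is to reconstruct its proof: a false-alarm bound, then a delay bound, then a choice of $w$ that balances the errors.

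\textbf{False alarm.} I would bound $\TFA$ from below with a likelihood-ratio martingale argument. For $n>w$ the estimate $\hat{Q}^w_n$ depends only on $X_{n-w}^{n-1}$. Hence, under $\Prob_\infty$,
\begin{equation*}
\E_\infty\!\left[\frac{\hat{Q}^w_n[X_n]}{P[X_n]} \,\Big|\, X^{n-1}\right]=1 .
\end{equation*}
So $\exp(\sum_{j=k}^n \hat{Z}^w_j)$ is a nonnegative $\Prob_\infty$-martingale for every starting index $k$. Writing the NWLA stopping time as $\inf_k$ of one-sided SPRT-type stopping times, Ville's inequality gives probability at most $e^{-h}$ that each of them stops. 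The standard Lorden argument (Theorem 2 in \cite{lorden1971procedures}) then yields $\TFA\ge e^{h}$, so $h=\log\TFA$ suffices. This step needs no condition on the estimator beyond positivity of $\hat Q$.

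\textbf{Delay.} Condition on the worst-case pre-change history; since $\hat S^w\ge 0$, that history can only help. After $\nu+w$, every window contains only post-change samples. The estimator conditions, read with the true data law being the post-change law $Q$ and the estimator approximating it, should be applied in that regime. The plan is to decompose
\begin{equation*}
\hat Z^w_n = \log\frac{Q[X_n]}{P[X_n]} - \log\frac{Q[X_n]}{\hat Q^w_n[X_n]} .
\end{equation*}
The first term has mean $D(Q\|P)$. The second has conditional mean $D(Q\|\hat Q^w_n)$, whose expectation is $O(w^{-\beta_1})$ by \Cref{condition:KL_Loss}, and whose second moment is $O(w^{-\beta_2})$ by \Cref{condition:SecondMoment}. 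The post-change increments therefore have drift at least $D(Q\|P)-C_1w^{-\beta_1}$.

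Consecutive increments are $w$-dependent. To handle this, I would split the sum into $w$ interleaved subsequences of i.i.d.-conditional blocks, or use a martingale decomposition. Then a Wald/renewal-type bound with a Chebyshev or Kolmogorov maximal inequality controls the overshoot and the fluctuation. The second-moment condition is what makes the fluctuation of the error sum $o(h)$. The result is
\begin{equation*}
\WADD\le w+\frac{h}{D(Q\|P)-O(w^{-\beta_1})}+O\!\left(\sqrt{h}\,\mathrm{polylog}\right).
\end{equation*}

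\textbf{Choice of $w$ and main obstacle.} Finally, set $h=\log\TFA$ and $w=(\log\TFA)^{1/2}$. The additive $w$ contributes a relative error $(\log\TFA)^{-1/2}$. With $\beta_1=1/2$, the drift loss contributes $w^{-1/2}=(\log\TFA)^{-1/4}$, and this dominates, giving the stated $\Theta((\log\TFA)^{-1/4})$ correction. I expect the dependence structure in the delay analysis to be the main obstacle. The estimator reuses overlapping windows, so the error terms are correlated, and one must show their cumulative sum over roughly $h/D$ steps is uniformly $O(h w^{-\beta_1})$ with high probability. My first attempt would be the interleaving into $w$ sub-sums, each a martingale-difference sequence plus a drift term, followed by a union bound.
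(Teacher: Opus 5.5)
The paper does not prove this statement; it imports it from Liang--Veeravalli, so your reconstruction has to stand on its own.

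\textbf{What works.} The false-alarm half is fine. The conditional mean-one property of $\hat Q^w_n[X_n]/P[X_n]$ under $\Prob_\infty$ is the right tool. One caveat: $N_k$ also looks at $X_{k-w}^{k-1}$, so it is not independent of $\{T\ge k\}$, and Lorden's theorem cannot be quoted verbatim. What the argument actually needs, and what your martingale gives, is the conditional bound $\Prob_\infty(N_k<\infty\mid X^{k-1})\le e^{-h}$.

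Your reduction $\WADD\le w+\E[\tau]$ is also fine, with $\tau=\inf\{n:\sum_{i=1}^n\hat Z^w_{\nu+w+i}\ge h\}$. Reflection at $0$ gives $\hat S^w_n\ge\sum_{i=\nu+w+1}^n\hat Z^w_i$, and these increments involve only post-change data.

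\textbf{The gap.} It is the step you flag as the main obstacle, and your proposed route does not reach the stated rate. \Cref{condition:KL_Loss} bounds only $\E[D(Q\|\hat Q^w_i)]$. So it gives no high-probability control of the cumulative estimation error at scale $nw^{-\beta_1}$.

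Chebyshev on interleaved sub-sums plus a union bound gives a per-block failure probability of order $w^{2-\beta_2}/(n\delta^2)$. (The spacing must be $w+1$, since $\hat Z^w_i$ and $\hat Z^w_{i+w}$ share $X_i$.) Take $n\asymp h=\log\TFA$, $w=h^{1/2}$, $\beta_2=\tfrac{1}{2}$, and the slack $\delta\asymp h^{-1/4}$ that the claimed correction requires. Then this bound is of order $h^{1/4}$, i.e.\ vacuous. You would get first-order optimality, but not the $(\log\TFA)^{-1/4}$ term.

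\textbf{The missing idea.} Work in expectation, and decouple the stopping event from the estimation window. Put $D_i=D(Q\|\hat Q^w_{\nu+w+i})$, so that $\E[\hat Z^w_{\nu+w+i}\mid\mathcal{F}_{\nu+w+i-1}]=D(Q\|P)-D_i$.

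Apply optional stopping at $\tau\wedge N$. On a finite alphabet $\hat Z^w\le\log(1/p_{\min})$, which bounds the overshoot. Together these give
\begin{equation*}
D(Q\|P)\,\E[\tau\wedge N]-\E\bigl[\textstyle\sum_{i\le\tau\wedge N}D_i\bigr]\le h+\log(1/p_{\min}).
\end{equation*}
Now $D_i$ depends only on $X_{\nu+i}^{\nu+w+i-1}$. That window is independent of $\{\tau\ge i-w\}\in\sigma(X_{\nu+1}^{\nu+i-1})$, and $\{\tau\ge i-w\}\supseteq\{\tau\ge i\}$. Hence $\E[D_i\mathbf{1}\{\tau\ge i\}]\le C_1w^{-\beta_1}\Prob(\tau\ge i-w)$. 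Summing, $\E\bigl[\sum_{i\le\tau\wedge N}D_i\bigr]\le C_1w^{-\beta_1}(w+\E[\tau\wedge N])$. Letting $N\to\infty$,
\begin{equation*}
\E[\tau]\le\frac{h+\log(1/p_{\min})+C_1w^{1-\beta_1}}{D(Q\|P)-C_1w^{-\beta_1}}.
\end{equation*}
With $h=\log\TFA$, $w=(\log\TFA)^{1/2}$ and $\beta_1=\tfrac{1}{2}$, this is the claim. No concentration of the estimation error is needed at all.

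Alternatively, you can keep your block argument if you drop the union bound. Bound the variance of the full error sum directly via Cauchy--Schwarz across the $w+1$ sub-sums. That gives order $w^{1-\beta_2}/(n\delta^2)$, which suffices when $\beta_2\ge\tfrac{1}{2}$, as in this paper's application.
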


\subsection{Quantum Change Point Detection}

Suppose that $\Hcal$ is a $b$-dimensional Hilbert space, and $\Dcal(\Hcal)$ is the set of all density operators on $\Hcal$. In the quantum version of the change point detection problem, we have a quantum source that emits $\xi_\tau \in \Dcal(\Hcal)$. 
Given a pre-change operator $\rho \in \Dcal(\Hcal)$ and post-change operator $\sigma \in \Dcal(\Hcal)$, along with a change point $\nu \geq 0$, the change point problem has the sequence 
\begin{equation}
    \xi_\tau = \begin{cases}
        \rho, & \tau \leq \nu\\
        \sigma, & \tau > \nu
    \end{cases}
\end{equation}

Let $\xi^{\otimes n}$ denote the $n$-fold tensor product of the operator $\xi$. 
For $1\le k\le n$, define  $\xi_{k}^{n} \;:=\; \bigotimes_{i=k}^{n} \xi_i $.  The most general quantum algorithm considered here collects states into blocks of length $\ell$, performs joint measurements on each block using the POVM $\Mcal = \{M_k\}$ inducing classical random variables $\XlM$, and applies classical change point methods to the variables to determine if a change occurred. 
$\XlM_t=k$ indicates that $M_k$ was successful for the $t^{th}$ block. The distribution of the random variables is given by 
\begin{equation}
    \Prob[\XlM_t = k] = \Tr\left[M_k \ \xi_{\ell  \cdot t}^{\ell\cdot (t + 1)}\right]
\end{equation}
The quantum versions of the stopping time $T$, average time to false $\TFA$, and WADD are defined analogously to the classical counterpart defined for $\XlM_t$.  

One of the main results of \cite{fanizza2023ultimate} is a converse that holds for algorithms using the most general methods for recovering information from the states. 
\begin{theorem}[Theorem 2 of \cite{fanizza2023ultimate}]
    \label{theorem:QuantumCPD_Converse}
    Given a change point problem with $S(\sigma \| \rho) < +\infty$. For any detection algorithm with stopping time $T$, $\WADD(T)$ and $\TFA$ 
    must satisfy
    \begin{equation}
        \WADD(T) \geq  \frac{\log \TFA}{S(\sigma\| \rho)}(1-\eps)(1 + o(1)),
    \end{equation}
    for any $\eps > 0$. 
\end{theorem}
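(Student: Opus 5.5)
The plan is to reduce the statement to a single binary quantum hypothesis test on the $m$ systems emitted right after a well-chosen change point. The quantum strong converse (Ogawa--Nagaoka) then controls that test, following the classical Lai/Lorden change-of-measure argument behind \eqref{equation:LordenConverse}. Fix $\eps>0$, write $\gamma=\TFA$ and $S=S(\sigma\|\rho)$, and set
\begin{equation*}
m=m_\gamma=\left\lfloor \frac{(1-\eps)\log\gamma}{S}\right\rfloor,
\end{equation*}
rounded to a multiple of the block length $\ell$ so that change points may be taken at block boundaries. The goal is to exhibit, for all large $\gamma$, a change point $\nu$ and a pre-change record $X^\nu$ with
\begin{equation*}
\Prob_\nu[\,T\le \nu+m \mid T>\nu, X^\nu\,]\to 0 .
\end{equation*}
This gives $\E_\nu[T-\nu\mid T>\nu,X^\nu]\ge m(1-o(1))$, hence $\WADD(T)\ge \frac{\log\gamma}{S}(1-\eps)(1+o(1))$.

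\textbf{Step 1 (choosing $\nu$ under $\Prob_\infty$).} I will use the standard pigeonhole argument. If $\Prob_\infty[T\le \nu+m\mid T>\nu, X^\nu]\ge \alpha$ held for every $\nu\in m\mathbb{N}$ and every admissible $X^\nu$, then $T$ would be stochastically dominated by $m$ times a geometric variable, giving $\E_\infty[T]\le m/\alpha$. Taking $\alpha=2m/\gamma$ contradicts $\E_\infty[T]\ge\gamma$. Hence some $\nu$ and some record $X^\nu$ satisfy
\begin{equation*}
\Prob_\infty[T\le\nu+m\mid T>\nu,X^\nu]<2m/\gamma .
\end{equation*}

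\textbf{Step 2 (turning the event into a POVM element).} With the pre-change outcomes fixed, the systems $\xi_{\nu+1},\dots,\xi_{\nu+m}$ are in a product state independent of the past. This is $\rho^{\otimes m}$ under $\Prob_\infty$ and $\sigma^{\otimes m}$ under $\Prob_\nu$. However $T$ is generated from them (block measurements, possibly adaptive, followed by classical post-processing), the event $\{T\le\nu+m\}$ corresponds to a two-outcome POVM $\{A,\mathbb{1}-A\}$ on $\Hcal^{\otimes m}$ that depends only on $X^\nu$. Thus
\begin{equation*}
\Tr[\rho^{\otimes m}A]<2m/\gamma \qquad\text{and}\qquad \Prob_\nu[T\le\nu+m\mid\cdots]=\Tr[\sigma^{\otimes m}A].
\end{equation*}
This is the step I expect to need the most care. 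One must check that adaptivity, classical randomness, and possible ancillas never make $A$ depend on the post-change states beyond a valid POVM. One must also check that the conditioning on $\{T>\nu\}$ only involves pre-$\nu$ data, which holds because $T$ is a stopping time.

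\textbf{Step 3 (strong converse).} Since $m\le(1-\eps)\log\gamma/S$, we have
\begin{equation*}
2m/\gamma\le 2m\, e^{-mS/(1-\eps)}=e^{-m(S+\delta)}\qquad\text{for some }\delta>0\text{ and large } m .
\end{equation*}
The quantum strong converse of Stein's lemma (Ogawa--Nagaoka, or its Rényi-divergence form) states that any test with $\Tr[\rho^{\otimes m}A]\le e^{-m(S+\delta)}$ has $\Tr[\sigma^{\otimes m}A]\le e^{-mc(\delta)}$ for some $c(\delta)>0$. Finiteness of $S(\sigma\|\rho)$ is used here, through the sandwiched Rényi divergences $\tilde D_\alpha(\sigma\|\rho)\downarrow S$ as $\alpha\downarrow1$.

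\textbf{Step 4 (conclusion).} The conditional probability of stopping within $m$ samples after $\nu$ therefore vanishes, so $\E_\nu[T-\nu\mid T>\nu,X^\nu]\ge m(1-e^{-mc})$. Taking the supremum over $\nu$ and $X^\nu$, and noting that $\eps$ was arbitrary, gives the claimed bound.
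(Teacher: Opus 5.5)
The paper gives no proof of this statement. It is quoted from \cite{fanizza2023ultimate} and used only as the converse half of the optimality claim, so there is no in-paper argument to compare yours with.

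Your proposal is a correct, self-contained proof. It is the quantum transcription of Lai's change-of-measure converse, which I believe is also how the cited work proceeds, though I have not checked its argument line by line. The three steps hold up:
\begin{itemize}
\item The pigeonhole step under $\Prob_\infty$ is right.
\item Collapsing $\{T\le\nu+m\}$ to a single effect $0\le A\le I$ on the $m$ fresh systems is legitimate. The conditional state of any memory given $X^\nu$ is the same under $\Prob_\infty$ and $\Prob_\nu$, so $A=\Tr_{\mathrm{mem}}[(\omega_{X^\nu}\otimes I)\tilde A]$, with classical randomness and ancillas absorbed into $\tilde A$.
\item Replacing the classical weak-law step by the strong converse through $\tilde D_\alpha(\sigma\|\rho)\downarrow S(\sigma\|\rho)$ is correct.
\end{itemize}

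Two small points to tidy:
\begin{itemize}
\item You implicitly need $S(\sigma\|\rho)>0$. For $\sigma=\rho$, a data-independent geometric stopping rule has $\WADD=\TFA<\infty$, so the stated bound fails.
\item Rounding $m$ down to a multiple $m'$ of $\ell$ is free only when $\ell=o(\log\TFA)$, which holds for the paper's $\ell=O(\log\log\TFA)$. In general, use that at block boundaries $T-\nu\in\ell\N$. Then $\{T>\nu+m'\}=\{T\ge\nu+m'+\ell\}$ with $m'+\ell>m$, so $\E_\nu[T-\nu\mid T>\nu,X^\nu]\ge m(1-o(1))$ still holds.
\end{itemize}
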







In addition,  the Quantum CUSUM (QUSUM) proposed in \cite{fanizza2023ultimate} achieves an asymptotic tradeoff: 
\begin{equation}
    \label{equation:QUSUM_Achievability}
    \WADD(T_{\text{QUSUM}}) \leq \frac{\log \TFA}{\QRE{\sigma}{\rho} (1 - \eps)} + O(1),
\end{equation}
where $\eps$ vanishes as $\TFA \to \infty$. The achievability result is then extended to cases with  post-change states taken from a finite family, and they provide a robustness result for infinite families with suitable discretizations. One of the key pieces in the achievability result is an appropriate choice of projection valued measure (PVM). 



Suppose a measurement on $\sigma$ or $\rho$ reduces the operators to classical probability distributions $P_{\sigma}$ or $P_{\rho}$. The data processing inequality \cite{wilde2013quantum} states that 
\begin{equation}
    \RE{P_{\sigma}}{P_{\rho}} \leq S(\sigma\| \rho).
\end{equation}
A natural question is whether any set of measurements can saturate this inequality. The next theorem gives an affirmative answer in an asymptotic regime. 

To improve the distinguishability of the classical distributions induced from measurement, we collect states into blocks of size $\ell$, and then perform a joint measurement on the whole block. 
\begin{theorem}[Theorem 2 \cite{hayashi2001asymptotics}]
    \label{theorem:MeasurementTheorem}
    Suppose $\rho, \sigma \in \Dcal(\Hcal)$. There exists a PVM $\Mcal_{\ell} = \{M_k\}$ on blocks of size $\ell$, which depends only on $\rho$ (independent of $\sigma$), so that $P^{(\ell)} = (p_{k}^{(\ell)})_{k}$, $Q^{(\ell)} = (q_k^{(\ell)})_{k}$ where
    \begin{align}
        p_k^{(\ell)} = \Tr[M_k \rho^{\otimes \ell}] \quad \text{and,} \quad q_k^{(\ell)} = \Tr[M_k \sigma^{\otimes \ell}]\;
    \end{align}
    such that 
    \begin{equation}
        \label{equation:IneqBetClassicAndQuantRelEnt}
        \lim_{\ell \to \infty} 
        \frac{1}{\ell}\RE{Q^{(\ell)}}{P^{(\ell)}}  = S(\sigma\|\rho)
    \end{equation}
\end{theorem}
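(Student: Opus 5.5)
We need to prove Hayashi's theorem: there is a PVM $\Mcal_\ell$, depending only on $\rho$, whose induced classical distributions satisfy $\frac{1}{\ell}\RE{Q^{(\ell)}}{P^{(\ell)}}\to S(\sigma\|\rho)$. The plan is to use the pinching (spectral-decomposition) measurement with respect to $\rho^{\otimes \ell}$ together with the pinching inequality; it is the standard route and clearly depends only on $\rho$.

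\textbf{Step 1: the measurement.} Write $\rho^{\otimes\ell}=\sum_{j} \lambda_j E_j$, the spectral decomposition into distinct eigenvalues with eigenprojections $E_j$. Since $\rho$ has at most $b$ distinct eigenvalues, the eigenvalues of $\rho^{\otimes \ell}$ are products indexed by types. Hence the number of distinct eigenvalues is $v_\ell\le (\ell+1)^{b}$, which is polynomial in $\ell$. Refine each $E_j$ into rank-one projections $M_k$ from an orthonormal basis of $\operatorname{ran}E_j$. Concretely, take the basis that diagonalizes the block $E_j\sigma^{\otimes \ell}E_j$. The dependence on $\sigma$ here is harmless, since the coarse PVM $\{E_j\}$ already suffices, as explained next. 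If strict $\sigma$-independence is required, simply use $\Mcal_\ell=\{E_j\}$.

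Let $\mathcal{E}(X)=\sum_j E_jXE_j$ denote the pinching map. For the measurement $\{E_j\}$ we have $p_j=\lambda_j\Tr E_j$ and $q_j=\Tr[E_j\sigma^{\otimes\ell}]$. A direct computation gives
\begin{equation*}
\RE{Q^{(\ell)}}{P^{(\ell)}}=S\bigl(\mathcal{E}(\sigma^{\otimes\ell})\,\big\|\,\rho^{\otimes\ell}\bigr)-\Bigl(S\bigl(\mathcal{E}(\sigma^{\otimes\ell})\bigr)-H(\text{block-normalized})\Bigr).
\end{equation*}
The cleaner way to state this is
\[\RE{Q^{(\ell)}}{P^{(\ell)}}\ \ge\ S\bigl(\mathcal{E}(\sigma^{\otimes\ell})\|\rho^{\otimes\ell}\bigr)-\log\max_j \operatorname{rank}\text{-free term}.\]
This route is messy, so instead I will use the rank-one refinement, for which the measured distribution is exactly the spectrum of $\mathcal{E}(\sigma^{\otimes\ell})$ in a basis that commutes with $\rho^{\otimes\ell}$. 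Because $\mathcal{E}(\sigma^{\otimes\ell})$ and $\rho^{\otimes\ell}$ commute and are simultaneously diagonal in that basis, we get exactly $\RE{Q^{(\ell)}}{P^{(\ell)}}=S(\mathcal{E}(\sigma^{\otimes\ell})\|\rho^{\otimes\ell})$.

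\textbf{Step 2: compare with the unpinched quantity.} Since $\rho^{\otimes \ell}$ commutes with every $E_j$, we have $\Tr[\mathcal{E}(\sigma^{\otimes\ell})\log\rho^{\otimes\ell}]=\Tr[\sigma^{\otimes\ell}\log\rho^{\otimes\ell}]$. Therefore
\[
S(\mathcal{E}(\sigma^{\otimes\ell})\|\rho^{\otimes\ell})=S(\sigma^{\otimes\ell}\|\rho^{\otimes\ell})-\bigl[S(\mathcal{E}(\sigma^{\otimes\ell}))-S(\sigma^{\otimes\ell})\bigr].
\]
The entropy gap in brackets must now be bounded. The pinching inequality $\sigma^{\otimes\ell}\le v_\ell\,\mathcal{E}(\sigma^{\otimes\ell})$, combined with operator monotonicity of $\log$, gives
\[
S(\mathcal{E}(\sigma^{\otimes\ell}))-S(\sigma^{\otimes\ell})=S(\sigma^{\otimes\ell}\|\mathcal{E}(\sigma^{\otimes\ell}))\le\log v_\ell\le b\log(\ell+1).
\]
The first equality is the standard identity for pinchings.

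\textbf{Step 3: conclude.} Additivity gives $S(\sigma^{\otimes\ell}\|\rho^{\otimes\ell})=\ell S(\sigma\|\rho)$. Combining with Step 2,
\[
\ell S(\sigma\|\rho)-b\log(\ell+1)\le\RE{Q^{(\ell)}}{P^{(\ell)}}\le \ell S(\sigma\|\rho),
\]
where the upper bound is the data processing inequality. Dividing by $\ell$ and letting $\ell\to\infty$ yields the claim. If $S(\sigma\|\rho)=\infty$, meaning the support of $\sigma$ is not contained in that of $\rho$, then some outcome in $\ker\rho^{\otimes\ell}$ has $p_k=0<q_k$ and both sides are infinite.

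\textbf{Main obstacle.} The delicate point is $\sigma$-independence together with the exact equality in Step 1. The rank-one refinement uses $\sigma$, whereas the coarse projective measurement $\{E_j\}$ does not. I would handle this by working with $\{E_j\}$ directly: there $\RE{Q}{P}=S(\mathcal{E}(\sigma^{\otimes\ell})\|\rho^{\otimes\ell})-\sum_j q_j\,S(\text{normalized block})$ plus a term $\sum_j q_j\log\Tr E_j$. I would then bound the extra terms by a universal-state argument. Specifically, one replaces $\rho^{\otimes\ell}$-pinching by pinching with respect to $\rho^{\otimes\ell}$ and the type projections of a universal symmetric state $\omega_\ell$. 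This state commutes with all permutation-invariant operators and satisfies $\sigma^{\otimes\ell}\le(\ell+1)^{b^2}\omega_\ell$, so the within-block entropy is at most $O(\log\ell)$, uniformly in $\sigma$. Since all the losses are $O(\log \ell)$, they are sublinear and vanish after normalization by $\ell$.
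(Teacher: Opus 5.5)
There is a genuine gap, and it sits at the one property the statement (and the paper's use of it) depends on: the PVM must not depend on $\sigma$. The paper gives no proof of its own; it imports the result from Hayashi, whose argument is representation-theoretic.

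Your Steps 1--3 are correct, but only for the rank-one refinement built from eigenbases of $E_j\sigma^{\otimes\ell}E_j$. That measurement depends on $\sigma$, so Steps 1--3 prove only a weaker, $\sigma$-dependent statement.

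Your fallback $\Mcal_\ell=\{E_j\}$ does not work. Pick an eigenbasis $\{e_i\}$ of $\rho$ that also diagonalizes the single-copy pinching $\mathcal{E}_\rho(\sigma)$. Each $E_j$ is a sum of product projections $|e_{i_1}\rangle\langle e_{i_1}|\otimes\cdots\otimes|e_{i_\ell}\rangle\langle e_{i_\ell}|$, so $\{E_j\}$ is a coarse-graining of an i.i.d. single-copy measurement. Classical data processing then gives $\frac{1}{\ell}\RE{Q^{(\ell)}}{P^{(\ell)}}\le S(\mathcal{E}_\rho(\sigma)\|\rho)=S(\sigma\|\rho)-S(\sigma\|\mathcal{E}_\rho(\sigma))$. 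This is strictly below a finite $S(\sigma\|\rho)$ whenever $[\sigma,\rho]\neq 0$. For $\rho=I/b$ it is even worse: $\{E_j\}=\{I\}$ and the value is $0$.

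The repair in your last paragraph does not close this, because of a sign error. Write $E_j\sigma^{\otimes\ell}E_j=q_j\tau_j$. The correct identity is
\[
\RE{Q^{(\ell)}}{P^{(\ell)}}=S\bigl(\mathcal{E}(\sigma^{\otimes\ell})\big\|\rho^{\otimes\ell}\bigr)-\sum_j q_j\bigl[\log\Tr E_j-S(\tau_j)\bigr],
\]
and the correction is nonpositive, as data processing requires. So what you need is that the block states are nearly maximally mixed on their blocks. An $O(\log\ell)$ upper bound on the within-block entropy is useless here, since $\log\Tr E_j$ is of order $\ell$.

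The missing mechanism is to use blocks on which the universal state is flat. Take $F=E_jP_\lambda$, with $P_\lambda$ the Schur--Weyl (Young) projections and $\omega_\ell=\sum_\lambda w_\lambda P_\lambda$.
\begin{itemize}
\item The $P_\lambda$ commute with both $\rho^{\otimes\ell}$ and $\sigma^{\otimes\ell}$. Hence the pinched state is still $\mathcal{E}(\sigma^{\otimes\ell})$, the displayed identity holds with $F$ in place of $E_j$, and your Step 2 bound applies verbatim.
\item On each block, $\omega_\ell F$ is a multiple of $F$. This gives
\[
\sum_F q_F\bigl[\log\Tr F-S(\tau_F)\bigr]=S\bigl(\mathcal{E}(\sigma^{\otimes\ell})\big\|\omega_\ell\bigr)-\RE{Q^{(\ell)}}{\Omega^{(\ell)}}\le S\bigl(\sigma^{\otimes\ell}\big\|\omega_\ell\bigr)\le\log c_\ell ,
\]
where $\Omega^{(\ell)}_F=\Tr[F\omega_\ell]$ and $c_\ell$ is the polynomial constant in $\sigma^{\otimes\ell}\le c_\ell\,\omega_\ell$.
\end{itemize}
Equivalently, in the decomposition $\Hcal^{\otimes\ell}=\bigoplus_\lambda\mathcal{U}_\lambda\otimes\mathcal{V}_\lambda$ each block is $F=\Pi\otimes I_{\mathcal{V}_\lambda}$. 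Since $\sigma^{\otimes\ell}$ acts trivially on the exponentially large multiplicity space $\mathcal{V}_\lambda$, the loss per block is at most $\log\dim\mathcal{U}_\lambda=O(\log\ell)$, uniformly in $\sigma$. This step is the real content of Hayashi's theorem; your proposal gestures at it but does not supply it.
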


Furthermore, \cite[Equation (9)]{hayashi2001asymptotics} shows that we can find such an $\mathcal{M}_{\ell}$ with 
\begin{align}
    \label{align:DimensionOfPVM}
    \abs{\mathcal{M}_{\ell}} &\leq (\ell+1)^{b} 
    \implies \supp(P^{(\ell)}), \supp(Q^{(\ell)}) &\leq (\ell+1)^{b}.
\end{align}


A key observation is that such a PVM can be constructed using only knowledge of $\rho$. This is precisely what is needed in the quantum universal QCD: one wants to convert the quantum QCD into classical QCD while preserving the relative entropy in the absence of any knowledge of the post-change quantum state. 

We assume in the present work that all density operators we deal with are finite dimensional.  Consistent with \cite{fanizza2023ultimate}, we do not consider the case 
with $S(\sigma \| \rho) = \infty$. This corresponds to a trivial tradeoff between the false alarm rate and quickest detection, and have been studied in previous work \cite{sentis2018online}. Without loss of generality we may assume that $\dim(\rho) = \dim(\Hcal) = b$.


\section{NWLA-QUSUM Algorithm for Universal Change Point Detection}
\label{section:Result}
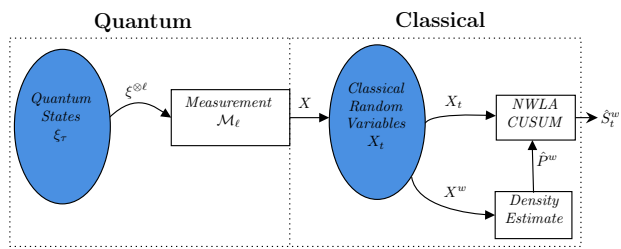
\begin{figure}[ht]
    \centering
    \resizebox{0.5\textwidth}{!}{
        \begin{tikzpicture}[x=0.75pt,y=0.75pt,yscale=-1,xscale=1]

\draw  [fill={rgb, 255:red, 255; green, 255; blue, 255 }  ,fill opacity=1 ] (215,110) -- (320,110) -- (320,160) -- (215,160) -- cycle ;

\draw    (160,135) .. controls (184.31,111.16) and (200.46,125.05) .. (212.54,133.36) ;
\draw [shift={(215,135)}, rotate = 212.42] [fill={rgb, 255:red, 0; green, 0; blue, 0 }  ][line width=0.08]  [draw opacity=0] (8.93,-4.29) -- (0,0) -- (8.93,4.29) -- cycle    ;
\draw    (320,135) -- (352,135) ;
\draw [shift={(355,135)}, rotate = 180] [fill={rgb, 255:red, 0; green, 0; blue, 0 }  ][line width=0.08]  [draw opacity=0] (8.93,-4.29) -- (0,0) -- (8.93,4.29) -- cycle    ;
\draw  [fill={rgb, 255:red, 74; green, 144; blue, 226 }  ,fill opacity=1 ] (397.5,70) .. controls (420.97,70) and (440,100.89) .. (440,139) .. controls (440,177.11) and (420.97,208) .. (397.5,208) .. controls (374.03,208) and (355,177.11) .. (355,139) .. controls (355,100.89) and (374.03,70) .. (397.5,70) -- cycle ;
\draw  [fill={rgb, 255:red, 255; green, 255; blue, 255 }  ,fill opacity=1 ] (503,115) -- (573,115) -- (573,155) -- (503,155) -- cycle ;
\draw    (440,140) .. controls (444.99,131.12) and (484.65,132.43) .. (499.87,133.19) ;
\draw [shift={(502.67,133.33)}, rotate = 183.09] [fill={rgb, 255:red, 0; green, 0; blue, 0 }  ][line width=0.08]  [draw opacity=0] (8.93,-4.29) -- (0,0) -- (8.93,4.29) -- cycle    ;
\draw    (573,135) -- (590,135) ;
\draw [shift={(593,135)}, rotate = 180] [fill={rgb, 255:red, 0; green, 0; blue, 0 }  ][line width=0.08]  [draw opacity=0] (10.72,-5.15) -- (0,0) -- (10.72,5.15) -- (7.12,0) -- cycle    ;
\draw    (428,188) .. controls (432.14,214.42) and (477.01,221.08) .. (499.04,221.92) ;
\draw [shift={(502,222)}, rotate = 180.92] [fill={rgb, 255:red, 0; green, 0; blue, 0 }  ][line width=0.08]  [draw opacity=0] (8.93,-4.29) -- (0,0) -- (8.93,4.29) -- cycle    ;
\draw  [fill={rgb, 255:red, 255; green, 255; blue, 255 }  ,fill opacity=1 ] (502,201) -- (570,201) -- (570,243) -- (502,243) -- cycle ;
\draw    (536,202) -- (535,164.67) -- (535,158) ;
\draw [shift={(535,155)}, rotate = 90] [fill={rgb, 255:red, 0; green, 0; blue, 0 }  ][line width=0.08]  [draw opacity=0] (8.93,-4.29) -- (0,0) -- (8.93,4.29) -- cycle    ;
\draw  [dash pattern={on 0.84pt off 2.51pt}] (72,64) -- (578,64) -- (578,250) -- (72,250) -- cycle ;
\draw  [dash pattern={on 0.84pt off 2.51pt}]  (320,64) -- (320,250) ;
\draw  [fill={rgb, 255:red, 74; green, 144; blue, 226 }  ,fill opacity=1 ] (118.5,74) .. controls (141.97,74) and (161,104.89) .. (161,143) .. controls (161,181.11) and (141.97,212) .. (118.5,212) .. controls (95.03,212) and (76,181.11) .. (76,143) .. controls (76,104.89) and (95.03,74) .. (118.5,74) -- cycle ;

\draw (221,115) node [anchor=north west][inner sep=0.75pt]   [align=left] {\begin{minipage}[lt]{65.08pt}\setlength\topsep{0pt}
\begin{center}
\textit{Measurement}\\$\displaystyle \mathcal{M}_{\ell }$
\end{center}

\end{minipage}};
\draw (86,111) node [anchor=north west][inner sep=0.75pt]   [align=left] {\begin{minipage}[lt]{44.68pt}\setlength\topsep{0pt}
\begin{center}
\textit{Quantum}\\\textit{States}\\$\displaystyle \xi _{\tau}$
\end{center}

\end{minipage}};
\draw (173,102) node [anchor=north west][inner sep=0.75pt]   [align=left] {$\displaystyle \xi ^{\otimes \ell }$};
\draw (326,116) node [anchor=north west][inner sep=0.75pt]   [align=left] {$\displaystyle X$};
\draw (365,102) node [anchor=north west][inner sep=0.75pt]   [align=left] {\begin{minipage}[lt]{44.87pt}\setlength\topsep{0pt}
\begin{center}
\textit{Classical}\\\textit{Random}\\\textit{Variables}\\$\displaystyle X_{t}$
\end{center}

\end{minipage}};
\draw (509,117) node [anchor=north west][inner sep=0.75pt]   [align=left] {\begin{minipage}[lt]{40.12pt}\setlength\topsep{0pt}
\begin{center}
\textit{NWLA}\\\textit{CUSUM}
\end{center}

\end{minipage}};
\draw (455,195) node [anchor=north west][inner sep=0.75pt]   [align=left] {$\displaystyle X^{w}$};
\draw (594,126) node [anchor=north west][inner sep=0.75pt]   [align=left] {$\displaystyle \hat{S}^w_{t}$};
\draw (456,114) node [anchor=north west][inner sep=0.75pt]   [align=left] {$\displaystyle X_{t}$};
\draw (507,202) node [anchor=north west][inner sep=0.75pt]   [align=left] {\begin{minipage}[lt]{42.4pt}\setlength\topsep{0pt}
\begin{center}
\textit{Density }\\\textit{Estimate}
\end{center}

\end{minipage}};
\draw (537,164) node [anchor=north west][inner sep=0.75pt]   [align=left] {$\displaystyle \hat{P}^{w}$};
\draw (412,39) node [anchor=north west][inner sep=0.75pt]  [font=\Large] [align=left] {\textbf{Classical}};
\draw (145,40) node [anchor=north west][inner sep=0.75pt]   [align=left] {\textbf{{\Large Quantum}}};

\end{tikzpicture}
    }
    \caption{Block diagram of NWLA-QUSUM. Measured quantum operators become classical random variables. A window of classical random variables is combined to estimate the density, which is used in the Classical NWLA CUSUM algorithm.}
    \label{fig:AlgorithmDiagram}
\end{figure}

In order to use the result in \Cref{theorem:NWLA_CUSUM_Result} we need a suitable kernel density estimator. Since we are dealing with finite support distributions, we use the following simple kernel over a window of size $w$: Let $\supp(P) = [d]$ be the support of the known distribution $P$. Take $\DeEst$ to be our density estimator, such that $\forall j \in [d]$, and for any $k > w$
\begin{equation}
    \label{equation:KernelDef}
    \deEst_{k,j} = \frac{ 1 + \sum_{i=1}^w \indicator{X_{k-i} = j} }{w+d}.
\end{equation} 
When the time index $k$ is irrelevant, we may omit it. Notice that this density estimator has built-in smoothing: it gives every element in the support of $P$ a weight of at least $\frac{1}{w+d}$. This ensures that the first and second moments of the empirical log-likelihood ratio are finite. 

The achievability results in Eq.\ \eqref{equation:QUSUM_Achievability}, \Cref{theorem:NWLA_CUSUM_Result} are made possible when the block length $\ell$ and the window size $w$ are taken to scale with $\TFA$. However, in order to have our density estimator satisfy \Cref{condition:KL_Loss} and \Cref{condition:SecondMoment}, we will need to carefully balance the rates with which they scale. For example, if the support of the induced classical distribution grows faster than the window size, we should not expect the density estimate to converge to the true density. In \Cref{subsection:Appendix_Condition1} we show that when $d = O(w^{1/2})$, this density estimator satisfies \Cref{condition:KL_Loss} with $\beta_1 = \frac{1}{2}$. In \Cref{subsection:Appendix_Condition2} we show that when $d = O(w^{1/2})$ and $p_i = \Omega(w^{-1/2})$ for all $i \in [d]$, \Cref{condition:SecondMoment} holds for $\beta_2 = \frac{1}{2}$. 


We propose the NWLA-Quantum CUSUM (NWLA-QUSUM) algorithm for change point detection in the universal setting. First measure quantum operators in blocks of length $\ell$ using the measurement scheme from \cite{hayashi2001asymptotics}, giving us an induced classical random variable $X^{\ell, \Mcal}_t$. Then apply the NWLA-CUSUM algorithm to the classical random variables using $\DeEst$ as our kernel. A diagram of the algorithm is found in \Cref{fig:AlgorithmDiagram}. 


\begin{theorem}
    \label{theorem:quantumEmpiricalCUSUM}
    Take $w = (\log\TFA)^{1/2}$, and $\ell = O(\log(\log\TFA))$.
    For the quantum universal setting, the NWLA-QUSUM  algorithm with stopping time $T_{NQ}$, satisfies  
    \begin{equation}
        \WADD(T_{\text{NQ}}) \leq \frac{\log \TFA}{S(\sigma \| \rho) - \eps_{\ell}}\left(1 + O((\log\TFA)^{-1/4}\right),
    \end{equation}

    where $\eps_{\ell} \to 0$ as $\ell \to \infty$. 
\end{theorem}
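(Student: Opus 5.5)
The plan is to reduce the quantum problem to a classical NWLA-CUSUM problem on the block outcomes $\XlM_t$, apply \Cref{theorem:NWLA_CUSUM_Result}, and then convert block-time quantities back to state-time quantities.

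\textbf{Step 1: reduction to a classical stream.} Fix $\ell$ and let $\Mcal_\ell$ be the PVM of \Cref{theorem:MeasurementTheorem}. It depends only on $\rho$, so it is implementable in the universal setting. Measuring the blocks $\xi_{\ell t}^{\ell(t+1)}$ gives independent classical outputs. These are distributed as $P^{(\ell)}$ before the change and as $Q^{(\ell)}$ after it, except for at most one straddling block.

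Write $\eps_\ell := S(\sigma\|\rho) - \frac{1}{\ell}\RE{Q^{(\ell)}}{P^{(\ell)}}$. This is nonnegative by data processing applied to $\rho^{\otimes\ell},\sigma^{\otimes\ell}$ and additivity, and it tends to $0$ by \Cref{theorem:MeasurementTheorem}.

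The block-level false-alarm time is $\TFA^{(b)} = \E_\infty[T^{(b)}]$, with $T_{NQ} = \ell T^{(b)}$. So we may choose the threshold for a block false-alarm time $\TFA^{(b)} = \TFA/\ell$, which gives $\log \TFA^{(b)} \le \log\TFA$.

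Block detection delay translates into state delay with at most $\ell$ extra states from the straddling block. This uses a worst case over the pre-change conditioning, since the straddling block is a mixed tensor product; I will bound it by discarding that block, costing one block. Hence
\[
\WADD(T_{NQ}) \le \ell\,\WADD^{(b)} + O(\ell).
\]

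\textbf{Step 2: verify the hypotheses of \Cref{theorem:NWLA_CUSUM_Result}.} We need \Cref{condition:KL_Loss} and \Cref{condition:SecondMoment} for the estimator \eqref{equation:KernelDef} applied to $P^{(\ell)}$. By \eqref{align:DimensionOfPVM} the support satisfies $d \le (\ell+1)^b$. Choosing $\ell \le c\log\log\TFA$ with $c$ small (depending on $b$) gives
\[
d \le (\log\TFA)^{1/4} = O(w^{1/2}),
\]
which is the regime promised in \Cref{subsection:Appendix_Condition1}.

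For \Cref{condition:SecondMoment} we additionally need $\min_k p_k^{(\ell)} = \Omega(w^{-1/2})$. This is the main obstacle, since Hayashi's PVM may assign tiny probabilities.

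My first attempt is the following. Because $\dim\rho=b$ ($\rho$ has full rank), $\lambda_{\min}(\rho)>0$. Every nonzero projector $M_k$ then has
\[
\Tr[M_k\rho^{\otimes\ell}] \ge \lambda_{\min}(\rho)^\ell = e^{-\ell\log(1/\lambda_{\min})}.
\]
With $\ell \le c'\log\log\TFA$ this is at least $(\log\TFA)^{-c'\log(1/\lambda_{\min})} \ge w^{-1/2}$ once $c'$ is small enough. This is consistent with the stated $\ell=O(\log\log\TFA)$, while still allowing $\ell\to\infty$ so that $\eps_\ell\to 0$.

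We also need the constants $C_1, C_2$ to be uniform in $\ell$, because the distribution changes with $\TFA$. I will check that the appendix bounds depend only on the ratios $d/w^{1/2}$ and $w^{-1/2}/\min_k p_k$; this is what lets the $\Theta(\cdot)$ term remain uniform.

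\textbf{Step 3: combine.} Applying \Cref{theorem:NWLA_CUSUM_Result} at block level gives
\[
\WADD^{(b)} \le \frac{\log\TFA}{\ell\,(S(\sigma\|\rho)-\eps_\ell)}\bigl(1+O((\log\TFA)^{-1/4})\bigr).
\]
Multiplying by $\ell$ and adding $O(\ell) = O(\log\log\TFA)$, which is $o\bigl((\log\TFA)^{3/4}\bigr)$ and hence absorbed into the relative error term, yields the claimed bound.

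A last detail concerns the window $w=(\log\TFA)^{1/2}$, which is measured in blocks. I will note that $\log\TFA^{(b)}$ and $\log\TFA$ differ by $\log\ell = o(1)\cdot\log\TFA$, so this choice agrees with the prescription of \Cref{theorem:NWLA_CUSUM_Result} up to constants.
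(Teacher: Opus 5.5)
Your proposal is correct and is essentially the paper's own proof. It uses the $\rho$-only Hayashi PVM with at most $(\ell+1)^b$ outcomes, and the choice $\ell = \frac{\log w}{2\log(1/\gamma_{\min})} = O(\log\log\TFA)$ so that every outcome has $\Tr[M_k\rho^{\otimes \ell}] \ge \gamma_{\min}^{\ell} = w^{-1/2}$. It then applies \Cref{theorem:NWLA_CUSUM_Result} at block level, multiplies by $\ell$, and pays one extra block of delay when the change falls inside a block.

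Your extra bookkeeping covers points the paper passes over silently: the block-level false-alarm time $\TFA/\ell$, the window counted in blocks, the explicit check that $O(\ell)$ is $o\bigl((\log\TFA)^{3/4}\bigr)$, and the remark that the constants must be uniform in $\ell$. That uniformity is asserted rather than proved in your sketch, just as the paper implicitly assumes it.
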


Combining \Cref{theorem:quantumEmpiricalCUSUM} with \Cref{theorem:QuantumCPD_Converse}, we find that our NWLA-QUSUM algorithm has an optimal asymptotic tradeoff between the $\TFA$ and $\WADD$.

\section{Proofs}
\label{section:Proofs}
\ifarxiv
\subsection{Proof of \texorpdfstring{\Cref{theorem:quantumEmpiricalCUSUM}}{NWLA-QUSUM Result}}
\else
\subsection{Proof of \Cref{theorem:quantumEmpiricalCUSUM}}
\fi
\label{subsection:Appendix_Theorem4}
\begin{proof}
    Before we can apply the classical result, we need to verify the assumptions imposed in \Cref{condition:KL_Loss} and \Cref{condition:SecondMoment} hold, that $d = O(w^{1/2})$, and $p^{(\ell)}_{\min} = \Omega(w^{-1/2})$. Let $\gamma_{\min}$ be the smallest non-zero element of the spectrum of $\rho$. We take 
    
    \begin{equation}
        \ell = \frac{1}{2} \frac{ \log w }{ \log \frac{1}{\gamma_{\min}} } = O(\log (\log \TFA)).
    \end{equation}

    From \Cref{theorem:MeasurementTheorem} we can find a sequence of PVMs inducing classical distributions which have classical relative entropy approaching the quantum relative entropy as the block size grows. \eqref{align:DimensionOfPVM} shows that the induced distributions have 
    \begin{align*}
        \abs{\supp( P^{(\ell)} )}, \abs{\supp( Q^{(\ell)} )} &\leq (\ell + 1)^b = O((\log w)^b).
    \end{align*}

    Further, the minimum of the spectrum of $\rho^{\otimes \ell}$ is $\gamma_{\min}^{\ell}$: 
    \begin{equation}
        \gamma_{\min}^\ell = \gamma_{\min}^{- \frac{\log w}{2 \log \gamma_{\min} } } = w^{-1/2}.
    \end{equation}
    Since $\rho$ is full rank and $\Mcal_{\ell}$ is a PVM, then $\forall i \in \supp(P^{(\ell)})$, we get
    \begin{align*}
        p_i &= \Tr[\rho^{\otimes \ell} M_i] \geq \gamma_{\min}^{\ell} \Tr[M_i] \geq \gamma_{\min}^{\ell}.
    \end{align*}
    Therefore $p_i = \Omega(w^{-1/2})$.  

    For conciseness we use $X$ to denote $X^{\ell, M_\ell}$ here. Now we proceed with the change point analysis. Similar to the technique used in \cite{fanizza2023ultimate}, we prove the change point result first when $\nu = m \ell$ for $m \in \Z$ (i.e. $\nu$ is an integer multiple of $\ell$ with $m$ being the block index), and relax this assumption later. Under the assumption we have 
    \begin{equation*}
        \WADD(T_{\text{NQ}}) = \sup_{\nu \geq 0} \esssup_{X^m \sim P^{(\ell)}} \E_{\nu}[T_{\text{NQ}} - \nu | T_{\text{NQ}} \geq \nu, X^m].
    \end{equation*}

    From the algorithm, we have $T_{\text{NQ}} = B \ell$ for some $B \in \N$. Therefore, we get 
    \begin{align*}
        &\WADD(T_{\text{NQ}}) \\
        &= \sup_{\nu \geq 0} \sup_{ \substack{X^m \\\Prob_{\infty}[X^m] > 0 } } \E_{\nu}[T_{\text{NQ}} - \nu | T_{\text{NQ}} > \nu, X^m]\\
        &= \ell \cdot \sup_{m \geq 0} \sup_{\substack{ X^m\\ \Prob_{\infty}[X^m]>0 }} \E_{\nu}[B - m | B > m, X^m]\\
        &\overset{(a)}{\leq} \ell \cdot \frac{\log \TFA}{\RE{Q^{(\ell)}}{P^{(\ell)}}}\left(1+\Theta\left((\log\TFA)^{-1/4}\right)\right)\\
        &= \frac{\log \TFA}{\frac{1}{\ell} \RE{Q^{(\ell)}}{P^{(\ell)}}} \left(1 + \Theta\left((\log\TFA)^{-1/4}\right)\right)\\
        &\overset{(b)}{=} \frac{\log \TFA}{ S(\sigma\| \rho) - \eps_{\ell} }\left(1 + \Theta\left((\log\TFA)^{-1/4}\right)\right),
    \end{align*}
    where $(a)$ comes from \Cref{theorem:NWLA_CUSUM_Result}, and $(b)$ comes from \Cref{theorem:MeasurementTheorem}. 
    
    Now, in the more general case where the change can occur inside a block, we can decompose $\nu$ into a block index and a remainder term such that $\nu = m\ell + r$, where $r \in [0,\ell-1]$. Then we have
    { 
    \begin{align*}
        &\WADD(T_{\text{NQ}}) \\
        &\leq \sup_{\nu \geq 0} \sup_{\substack{X^{m+1}\\ \Prob_{\infty}[X^{m+1}]>0 }} \E_{\nu}[T_{\text{NQ}} - \nu | T_{\text{NQ}} > \nu, X^{m+1}]\\
        &\leq \ell + \sup_{\nu \geq 0} \sup_{ \substack{X^{m+1} \\ \Prob_{\infty}[X^m+1]>0} }\E_{\nu}[T_{\text{NQ}} - \nu - r | T_{\text{NQ}} > \nu + r, X^{m+1}]\\
        &\leq \ell + \frac{\log \TFA}{ S(\sigma\| \rho) - \eps_{\ell} }\left(1 + \Theta\left((\log\TFA)^{-1/4}\right)\right).
    \end{align*}
    }

    And since $\ell = O(\log(\log \TFA))$, we get finally
    \begin{equation*}
        \WADD(T) \leq \frac{\log \TFA}{ S(\sigma\| \rho) - \eps_{\ell} }\left(1 + \Theta\left((\log\TFA)^{-1/4}\right)\right).
    \end{equation*}
    
\end{proof}
\vspace{-3em}
\subsection{Condition 1}
\label{subsection:Appendix_Condition1}
{
Here we prove that the kernel described in \eqref{equation:KernelDef} satisfies \Cref{condition:KL_Loss}, 
for $\beta_1 = \frac{1}{2}$ if $d = O(w^{1/2})$.

We start by estimating $\E_P[\log \deEst_i]$ for $i \in [d]$, and use this to get a bound on the KL-Loss in the required form. Let $\bwi = \sum_{k=1}^w \indicator{X_k = i}$, and $\mui = w p_i = \E_P[\bwi]$, since $\bwi$ is a binomial distributed random variable
\begin{align*}
    \E_P[\log\deEst_i] &= - \log(d+w) + \E_P\left[\log\left(1 + \bwi\right)\right].
\end{align*}

By Taylor expanding $\log(1 + \bwi)$ around $1 + \mui$, we get 
{ \small
\begin{align*}
    &\log(1 + \bwi) \\
    &= \log(1 + \mui) + \frac{\bwi - \mui}{1 + \mui} - \frac{(\bwi - \mui)^2}{2(1 + \mui)^2} + O\left(\mui^{-2}\right)\\
    &\implies \E_P[\log(1+\bwi)] \\
    &= \log(1+w p_i) - \frac{w p_i(1-p_i)}{2 (1 + wp_i)^2} + O(w^{-2}),
\end{align*}
}
where the first equality comes from the fact that the 3rd central moment of a binomial distribution is $\E[(\bwi - \mui)^3] = w p_i(1 -p_i)(1 - 2p_i)$. 

Therefore
{\small
\begin{align}
    &\E_{P}[\log \deEst_i] \\
    &= \log\left(\frac{1 + w p_i}{d+w}\right) - \frac{w p_i (1 - p_i)}{2 (1 + wp_i)^2} + O\left(w^{-2}\right),
\end{align}
}

which gives us 
{\small
\begin{align*}
    &\E_P[D(P \| \DeEst)] \\
    &= \sum_{i=1}^d p_i \log\left(\frac{p_i(d+w)}{1 + w p_i}\right) + \frac{w p_i^2 (1 - p_i)}{2(1 +w p_i)^2} - O(w^{-2}).
\end{align*}
}

We bound the first term using $\log(1+x) \leq x$:
{ \small
\begin{align*}
    p_i \log\left(\frac{p_i(d+w)}{1 + w p_i}\right) 
    &= p_i \log \left(1 + \frac{p_i d  - 1}{1 + w p_i}\right)\\
    &\leq p_i \left(\frac{p_i d - 1}{1 + w p_i}\right) \leq \frac{d p_i}{w}.
\end{align*}
}

Now we bound $\frac{wp_i^2(1-p_i)}{2(1+w p_i)^2}$ as follows:
{\small
\begin{align*}
    \frac{w p_i^2 (1 - p_i)}{2 (1 + w p_i)^2} 
    &\leq \frac{w p_i^2}{2 w^2 p_i^2} = \frac{1}{2w},
\end{align*}
}
so that we have 
{ \small
\begin{align*}
    \E_P[D(P\| \DeEst)] &\leq \frac{3}{2}\cdot \frac{d}{w} - O(w^{-2}).
\end{align*}
}

From the assumption $d = O(w^{1/2})$, we get $\frac{d}{w} = O(w^{-1/2})$. Therefore,  \Cref{condition:KL_Loss} is satisfied with $\beta_1 = \frac{1}{2}$, and $C_1$ a suitably large constant. 
}

\subsection{Condition 2}
\label{subsection:Appendix_Condition2}

{ 

Here we show that the second moment  of the log-likelihood ratio for our estimator is bounded per \Cref{equation:Condition2}, in roughly the same way that we showed \Cref{equation:Condition1}. That is, we write $\deEst$ in terms of $\bwi$, approximate the expectation with a second order Taylor series, and then bound the remaining terms. Here, we have the assumptions $d = O(w^{1/2})$ and $p_i = \Omega(w^{-1/2})$.



We Taylor expand $\left(\log\frac{1 + \bwi}{(d+w) p_i}\right)^2$ around $\frac{1 + \mui}{(d+w)p_i}$, giving us
{ 
\begin{align*}
    &\left(\log\frac{1 +  \bwi}{(d+w) p_i}\right)^2 \\
    &= \left(\log\frac{1 + \mui}{(d+w)p_i}\right)^2 + \frac{2 \log\left(\frac{1 + \mui}{(d+w)p_i}\right)}{1 + \mui}(\bwi - \mui) \\
    &\ \ + \frac{(1 - \log\left(\frac{1+ \mui}{(d+w)p_i}\right)}{(1 + \mui)^2}(\bwi - \mui)^2 + O(\mui^{-2}),
\end{align*}
}
where $\mui = w p_i = O(w^{1/2})$. As $\mui$ is a deterministic quantity, applying this Taylor series allows us to evaluate an individual term in the expectation:
{ 
\begin{align*}
    &\E_P\left[\left(\log\frac{\deEst_i}{p_i}\right)^2\right] \\
    &= \left(\log\frac{1 + w p_i}{(d+w)p_i}\right)^2 \\
    &+ \frac{1 - \log\left(\frac{1 + w p_i}{(d+w)p_i}\right)}{(1 + w p_i)^2} \cdot w p_i (1-p_i) + O(w^{-1}).
\end{align*}
}

We can bound $\left(\log\frac{1 + w p_i}{(d+w)p_i}\right)^2$, in the following manner: 
{ 
\begin{align*}
    \abs{\log \frac{1 + wp_i}{(d+w) p_i}} 
    &\leq \log\left(1 + \frac{1}{w p_i}\right) + \log\left(1 + \frac{d}{w}\right)\\
    &\leq \frac{1}{w p_i} + \frac{d}{w}\\
    \implies \left(\log \frac{1 + w p_i}{(d+w)p_i}\right)^2 &\leq \left(\frac{1+ dp_i}{w p_i}\right)^2 = O(w^{-1}).
\end{align*}
}

Of note, if $d = O(w^{1/2})$, then, as $w \to \infty$,
{ 
\begin{equation*}
    \log\left(\frac{1 + w p_i}{d p_i + w p_i}\right) \to 0.
\end{equation*}
}

Therefore, for $\forall \eps > 0$, $\exists N_{\eps} > 0$ such that, $\forall w \geq N_{\eps}$,
{
\begin{equation*}
    1 - \log\left(\frac{1 + w p_i}{d p_i + w p_i}\right) \leq 1 + \eps.
\end{equation*}
}

As a result, for $w \geq N_{\eps}$ we have 
{ 
\begin{align*}
    \E_P\left[\left(\log\frac{\deEst_i}{p_i}\right)^2\right] 
    &\leq \frac{(1 + \eps)}{w p_i} + O(w^{-1}).
\end{align*}
}

Finally, if $w \geq N_{\eps}$, we have 
{
\begin{align*}
    &\E_P\left[\left(\log \frac{\DeEst[X]}{P[X]}\right)^2\right] \\
    \\&= \sum_{i = 1}^d P[X = i] \ \E_p\left[\left.\left(\log \frac{\DeEst[X]}{P[X]}\right)^2 \right| X = i\right]\\
    &\leq \sum_{i=1}^d p_i \frac{1 + \eps}{w p_i} + O(w^{-1}) = \frac{d}{w}(1 + \eps) + O(w^{-1}).
\end{align*}
}

Therefore, from our initial condition, we get $\frac{d}{w} = O(w^{-1/2})$. Thus \Cref{condition:SecondMoment} is satisfied with $\beta_2 = \frac{1}{2}$ and $C_2$ some large constant.  

}

\section{Conclusions}
\label{section:Conclusions}

This paper proposed a two-stage approach for quantum QCD with an unknown post-change quantum operator: the first stage employs PVMs that preserve the quantum relative entropy, while the second stage applies the classical windowed-CUSUM. We establish that such a scheme is asymptotically optimal in minimizing the delay to detection under a false alarm rate constraint. 

An interesting variation of the problem studied here is to relax the tensor product assumption, and instead take $\xi_t = U_t \rho U_t^\dagger$ or $\xi_t = U_t \sigma U_t^\dagger$ where $U_t$ is a unitary operator describing some form of time evolution. Time evolution is a fundamental property of quantum mechanics, and any algorithm with practical ambitions should account for it. 

\ifarxiv
\bibliographystyle{IEEEtran}
\bibliography{refs.bib}
\else
\clearpage 
\bibliographystyle{IEEEtran}
\bibliography{refs.bib}

@article{fanizza2023ultimate,
  title={Ultimate limits for quickest quantum change-point detection},
  author={Fanizza, Marco and Hirche, Christoph and Calsamiglia, John},
  journal={Physical Review Letters},
  volume={131},
  number={2},
  pages={020602},
  year={2023},
  publisher={APS}
}

@article{liang2024quickest,
  title={Quickest change detection with post-change density estimation},
  author={Liang, Yuchen and Veeravalli, Venugopal V},
  journal={IEEE Transactions on Information Theory},
  volume={70},
  number={11},
  pages={8072--8086},
  year={2024},
  publisher={IEEE}
}

@article{zecchin2025quantum,
  title={Quantum Sequential Universal Hypothesis Testing},
  author={Zecchin, Matteo and Simeone, Osvaldo and Ramdas, Aaditya},
  journal={arXiv preprint arXiv:2508.21594},
  year={2025}
}

@article{page1954continuous,
  title={Continuous inspection schemes},
  author={Page, Ewan S},
  journal={Biometrika},
  volume={41},
  number={1/2},
  pages={100--115},
  year={1954},
  publisher={JSTOR}
}

@article{lorden1971procedures,
  title={Procedures for reacting to a change in distribution},
  author={Lorden, Gary},
  journal={The annals of mathematical statistics},
  pages={1897--1908},
  year={1971},
  publisher={JSTOR}
}

@article{grootveld2025towards,
  title={Towards Quantum Universal Hypothesis Testing},
  author={Grootveld, Arick and Yang, Haodong and Chen, Biao and Gandikota, Venkata and Pollack, Jason},
  journal={arXiv preprint arXiv:2504.16299},
  year={2025}
}

@book{tartakovsky2014sequential,
  title={Sequential analysis: Hypothesis testing and changepoint detection},
  author={Tartakovsky, Alexander and Nikiforov, Igor and Basseville, Michele},
  year={2014},
  publisher={CRC press}
}

@article{akimoto2011discrimination,
  title={Discrimination of the change point in a quantum setting},
  author={Akimoto, Daiki and Hayashi, Masahito},
  journal={Physical Review A—Atomic, Molecular, and Optical Physics},
  volume={83},
  number={5},
  pages={052328},
  year={2011},
  publisher={APS}
}

@article{sentis2016quantum,
  title={Quantum change point},
  author={Sent{\'\i}s, Gael and Bagan, Emilio and Calsamiglia, John and Chiribella, Giulio and Munoz-Tapia, Ramon},
  journal={arXiv preprint arXiv:1605.01916},
  year={2016}
}

@article{sentis2017exact,
  title={Exact identification of a quantum change point},
  author={Sent{\'\i}s, Gael and Calsamiglia, John and Munoz-Tapia, Ramon},
  journal={arXiv preprint arXiv:1707.07769},
  year={2017}
}

@article{sentis2018online,
  title={Online strategies for exactly identifying a quantum change point},
  author={Sent{\'\i}s, Gael and Mart{\'\i}nez-Vargas, Esteban and Munoz-Tapia, Ramon},
  journal={Physical Review A},
  volume={98},
  number={5},
  pages={052305},
  year={2018},
  publisher={APS}
}

@article{hayashi2001asymptotics,
doi = {10.1088/0305-4470/34/16/309},
url = {https://doi.org/10.1088/0305-4470/34/16/309},
year = {2001},
month = {apr},
publisher = {},
volume = {34},
number = {16},
pages = {3413},
author = {Masahito Hayashi},
title = {Asymptotics of
quantum relative entropy from a representation theoretical
viewpoint},
journal = {Journal of Physics A: Mathematical and General}
}

@article{john2025fundamental,
  title={Fundamental Limits Of Quickest Change-point Detection With Continuous-Variable Quantum States},
  author={John, Tiju Cherian and Gagatsos, Christos N and Bash, Boulat A},
  journal={arXiv preprint arXiv:2504.16259},
  year={2025}
}

@article{xie2023window,
  title={Window-limited CUSUM for sequential change detection},
  author={Xie, Liyan and Moustakides, George V and Xie, Yao},
  journal={IEEE Transactions on Information Theory},
  volume={69},
  number={9},
  pages={5990--6005},
  year={2023},
  publisher={IEEE}
}

@article{lai1998information,
  title={Information bounds and quick detection of parameter changes in stochastic systems},
  author={Lai, Tze Leung},
  journal={IEEE Transactions on Information theory},
  volume={44},
  number={7},
  pages={2917--2929},
  year={1998},
  publisher={IEEE}
}

@article{liang2022quickest,
  title={Quickest change detection with non-stationary post-change observations},
  author={Liang, Yuchen and Tartakovsky, Alexander G and Veeravalli, Venugopal V},
  journal={IEEE Transactions on Information Theory},
  volume={69},
  number={5},
  pages={3400--3414},
  year={2022},
  publisher={IEEE}
}

@inproceedings{liang2023quickest,
  title={Quickest change detection with leave-one-out density estimation},
  author={Liang, Yuchen and Veeravalli, Venugopal V},
  booktitle={ICASSP 2023-2023 IEEE International Conference on Acoustics, Speech and Signal Processing (ICASSP)},
  pages={1--5},
  year={2023},
  organization={IEEE}
}

@article{fujiki2025quantum,
  title={Quantum hypothesis testing for composite alternative hypotheses},
  author={Fujiki, Daichi and Tanaka, Fuyuhiko and Sakashita, Tatsuya},
  journal={Physical Review A},
  volume={111},
  number={3},
  pages={032405},
  year={2025},
  publisher={APS}
}

@book{wilde2013quantum,
  title={Quantum information theory},
  author={Wilde, Mark M},
  year={2013},
  publisher={Cambridge University Press}
}

@article{banerjee2024quantum,
  title={Quantum change point and entanglement distillation},
  author={Banerjee, Abhishek and Bej, Pratapaditya and Bandyopadhyay, Somshubhro},
  journal={Physical Review A},
  volume={109},
  number={4},
  pages={042407},
  year={2024},
  publisher={APS}
}

@article{gong2025quantum,
  title={Quantum-enhanced change detection and joint communication detection},
  author={Gong, Zihao and Guha, Saikat},
  journal={Physical Review A},
  volume={112},
  number={3},
  pages={032604},
  year={2025},
  publisher={APS}
}
\fi

\end{document}